\begin{document}
\newtheorem{proposition}{Proposition}
\newtheorem{lemma}{Lemma}
\renewcommand{\algorithmicrequire}{\textbf{Input:}} 
\renewcommand{\algorithmicensure}{\textbf{Output:}}
\newtheorem{Definition}{Definition}
\newtheorem{Assumption}{Assumption}
\newtheorem{Theorem}{Theorem}
\newtheorem{Corollary}{Corollary}
\newtheorem{Proposition}{Proposition}
\newtheorem{Lemma}{Lemma}
\newtheorem{Remark}{Remark}

\title{Optimal Configuration of Reconfigurable Intelligent Surfaces With Non-uniform Phase Quantization\\
\thanks{This work is supported by the National Natural Science Foundation of China (NSFC) under Grants NO.12141107 and NO.11801200, and the Interdisciplinary Research Program of HUST (2023JCYJ012), and the Guangxi Science and TechnologyProject (AB21196034).}
}

\author{Jialong~Lu,~\IEEEmembership{Student Member,~IEEE,}
        Rujing~Xiong,~\IEEEmembership{Graduate Student Member,~IEEE,}
        Tiebin Mi, Ke Yin,~\IEEEmembership{Member,~IEEE,} and
        Robert~Caiming~Qiu,~\IEEEmembership{Fellow,~IEEE}
\thanks{J. Lu, R.~Xiong, T. Mi and R. Qiu are with the School of Electronic Information and Communications, Huazhong University of Science and Technology, Wuhan 430074, China (e-mail: \{jialong, mitiebin, rujing, caiming\}@hust.edu.cn).}
\thanks{K. Yin is with the Center for Mathematical Sciences, Huazhong University of Science and Technology, Wuhan 430074, China (e-mail: kyin@hust.edu.cn).}
\thanks{Part of this paper was presented at the 2024 IEEE Global Communications
Conference (GLOBECOM 2024).}
}

\maketitle

\begin{abstract}
    The existing methods for reconfigurable intelligent surface (RIS) beamforming in wireless communications are typically limited to uniform phase quantization. However, in practical applications, engineering challenges and design requirements often lead to non-uniform phase and bit resolution of RIS units, which limits the performance potential of these methods. To address this issue, this paper pioneers the study of discrete non-uniform phase configuration in RIS-assisted multiple-input single-output (MISO) communication and formulates an optimization model to characterize the problem. For single-user scenarios, the paper proposes a partition-andtraversal (PAT) algorithm that efficiently achieves the global optimal solution through systematic search and traversal. For larger-scale multi-user scenarios, aiming to balance performance and computational complexity, an enhanced PAT-based algorithm (E-PAT) is developed. By optimizing the search strategy, the E-PAT algorithm significantly reduces computational overhead and achieves linear complexity. Numerical simulations confirm the effectiveness and superiority of the proposed PAT and EPAT algorithms. Additionally, we provide a detailed analysis of the impact of non-uniform phase quantization on system performance.
\end{abstract}

\begin{IEEEkeywords}
RIS, discrete phase configuration, global optimum, partition-and-traversa.
\end{IEEEkeywords}

\section{Introduction}

Reconfigurable Intelligent Surfaces (RISs) have garnered significant attention for their ability to reconfigure electronic environments\cite{xiong2024fair,di2020smart,wu2023intelligent,9982493}. Generally, a RIS consists of numerous low-power well-designed passive reflecting units, each capable of independently manipulating electromagnetic properties such as the phase of incident waves. This characteristic enables the RIS to redistribute incident waves, facilitating complex beamforming functionalities. Extensive research on RIS has been conducted in various wireless systems such as multi-user/multi-antenna systems~\cite{di2020hybrid,zheng2024intelligent}, unmanned aerial vehicle networks~\cite{li2020reconfigurable,bansal2023ris}, physical layer security~\cite{cui2019secure,pei2023secrecy}, wireless sensing and location~\cite{elzanaty2021reconfigurable,10284917,10143420}, and edge computing~\cite{bai2020latency,chen2022irs}, satellite communications~\cite{tekbiyik2021graph,jiang2024hybrid,kim2024performance} among others.

To harness the potential of RIS in wireless communications, appropriately optimizing its reflection such as the phase configuration is crucial. Assuming that a RIS consists of units with continuous phase shifts, phase optimization is not a challenging task and can be implemented through alternating direction method of multipliers~\cite{liang2016unimodular,yang2021beamforming}, successive convex approximation~\cite{wang2022sca,kumar2022novel}, semidefinite relaxation-semidefinite program (SDR-SDP)~\cite{elmossallamy2021ris,zhou2020robust}, majorization-minimization~\cite{shen2019secrecy,salem2022active} and manifold optimization (Manopt)~\cite{yu2019miso,hu2022constant}.

However, it is more practical to assume discrete phase shifts for RIS units. This limitation is due to the hardware structure of RIS units~\cite{wu2019beamforming}, making the continuous-phase-shift assumption unrealistic~\cite{xiong2023design,rana2023review}. With the discrete phase, traditional techniques are no longer applicable. One method to tackle this issue is to perform resource-intensive exponential search techniques~\cite{xiong2022optimal}. Given that a RIS can have hundreds and thousands of units, this method is extremely time-consuming. 

Additionally, researchers have put forward various techniques to obtain sub-optimal solutions. Closest point projection (CPP) is one of the most popular methods. This method hard rounding the continuous solution to its discrete counterpart~\cite{xiong2024optimal,wang2020intelligent}, which means it quantizes the solution derived from the continuous-phase-optimization problem~\cite{wu2021intelligent}. However, hard rounding may lead to performance degradation, and in the worst-case scenario, it can even result in arbitrarily poor performance~\cite{xiong2024optimal,zhang2022configuring}. Furthermore, the authors in~\cite{wu2019beamforming} investigated the communication optimization from a multi-user multi-antenna access point to multiple single-antenna users and proposed a successive refinement algorithm to optimize the RIS configuration with discrete phase shifts. The angle-of-arrival estimation similarity method was proposed in~\cite{xu2019discrete}, for designing finite-resolution discrete phase shifts. In~\cite{di2020hybrid,yu2020optimal}, the authors developed the branch-and-bound algorithms to enhance achievable rates under limited discrete phase shifts in RIS-aided multi-user communications. Moreover, a training-set-based approach was introduced in~\cite{an2021low}, which integrates multiple channel estimations and transmit precoding. For achieving an enhanced signal-to-noise ratio (SNR) boost in RIS-assisted communications, the authors in~\cite{zhang2022configuring} proposed an approximation algorithm to approach the global optimal solution, while the authors in~\cite{ren2022linear} presented a rotation-based algorithm that achieves the global optimum by considering single-input single-output systems.

All the aforementioned methods assume uniform phase shifts for the RIS unit, i.e., the discrete phase shifts are evenly spaced within the range $[0,2\pi)$. However, in practical scenarios, phase shifts are not necessarily uniform~\cite{rana2023review}. For example, the tunable units for phase adjustment on RIS primarily consist of positive-intrinsic-negative (PIN) diodes and varactor diodes. Studies~\cite{dai2019wireless,huang2017dynamical,zhang2019breaking,li2019machine,sayanskiy20222d,kamoda201160,chen2022transparent,tang2023transmissive,zhang2024design} reveal that, in these designs, regardless of whether the RIS is reflective or transmissive, practical discrete phase shift measurements are typically non-uniform, deviating from the simulation results due to engineering limitations. Additionally, there are also some non-uniform phase shifts deliberately introduced by human design~\cite{tekbıyık2022graph}. To clarify, we delineate non-uniform phase shifts into two types throughout this paper. The first type is discussed at the unit level and primarily stems from two sources: one is the inaccuracies occurring in the engineering process such as printed circuit board (PCB) manufacturing~\cite{araghi2022reconfigurable}. The other source arises from frequency variation, for instance, consider a RIS unit designed to provide uniform phase shifts at 3 GHz. If the same unit is used at 3.1 GHz, the phase shifts would not remain uniform~\cite{bjornson2021optimizing}. The second type of non-uniformity occurs at the array level, mainly due to different quantization schemes adopted by different RIS units, such as 1-bit and 2-bit quantization. In scenarios with varying rate requirements, adjusting the bit allocation allows for meeting demands while optimizing costs. To address unit-level non-uniformity phase configuration, reference~\cite{hashemi2024optimal} proposes an optimal configuration method capable of achieving a global optimum solution with linear complexity, although limited to single-input single-out (SISO) scenarios. 

To achieve the optimal phase configuration considering both unit-level and array-level non-uniformities, this paper formulates a discrete optimization problem aimed at minimizing the transmitted signal power while guaranteeing the received SNR in multi-input single-output (MISO) scenarios. The partition-and-traversal (PAT) algorithm and its evolutional version, the E-PAT, are proposed to address this problem: PAT achieves the global optimum~\footnote{This content has already been partly presented at IEEE GLOBECOM 2024.}, while E-PAT significantly reduces computational complexity while maintaining performance nearly identical to the global optimum.
\subsection{Contributions}
The main contributions of this paper can be summarized as follows:

\begin{itemize}
\item \textbf{Non-uniformity discrete phase configuration in RIS beamformings.} Considering the practical requirements and engineering challenges encountered by RIS hardware, we define two types of non-uniformity at the unit and array levels. Furthermore, to address the non-uniform discrete phase configuration problem in RIS-assisted MISO communications, we model the signal and introduce a constrained quadratic optimization problem aimed at minimizing the transmitted signal power while satisfying the minimum SNR constraint at the receiver.
\item \textbf{Novel algorithm to obtain the global optimum.} By introducing auxiliary variables, we reformulate an equivalent form of the original optimization problem. Subsequently, we propose a partition-and-traversal (PAT) algorithm to solve it. The key of the PAT algorithm is to partition the high-dimensional space using the introduced auxiliary variables and construct a feasible solution set. The global optimum of the original problem can then be achieved through exhaustive searching of this solution set.

\item \textbf{Efficient algorithm to reduce computational complexity while achieving comparable performance to the global optimum.} Building upon the PAT algorithm, we propose an improved, efficient search algorithm that significantly reduces the computational complexity of the PAT algorithm while maintaining comparable transmit power performance. We performed theoretical analysis and experimental validation to compare the complexities and practical performances of both proposed algorithms. Simulation results demonstrate that both of the proposed algorithms outperform the benchmarks. Moreover, we emphasize that proposed algorithm has the potential to achieve a high probability of finding the global optimal solution with a computational complexity of $O(N)$.
\end{itemize}

\subsection{Organization}

The remainder of the paper is organized as follows. In Section~\ref{Section2}, we present the modeling of multi-RIS-assisted multi-user communications and formulate the beamforming problem. Section~\ref{Section3} introduces the PAT algorithm, which efficiently addresses the problem and guarantees a global optimal solution. In Section~\ref{Section4}, we further analyze and propose an efficient extension of the PAT algorithm, referred to as the E-PAT algorithm. Section~\ref{Section5} is dedicated to the performance evaluations of the proposed algorithm through numerical tests. Finally, the paper is concluded in Section~\ref{Section6}.

\subsection{Notations}

The imaginary unit is denoted by $j$. The magnitude, real and complex components of a complex number are represented by $|\cdot|$, $\Re(\cdot)$ and $\Im(\cdot)$, respectively. Unless explicitly specified, lower and upper case bold letters denote vectors and matrices. The conjugate transpose, conjugate, and transpose of $\mathbf{A}$ are written as $\mathbf{A}^\mathrm{H}$, $\mathbf{A}^*$ and $\mathbf{A}^\mathrm{T}$, respectively. $\mathrm{diag}(\cdot)$ represents the operation of generating a diagonal matrix. 

\section{System Model And Problem Formulation}\label{Section2}

\subsection{ System Model}

As illustrated in Fig~\ref{1}, We consider a downlink communication system assisted by multiple RISs within a single-cell network. In this scenario, different types of RIS are strategically deployed to facilitate communication from a multi-antenna access point (AP) to $M$ single-antenna users, operating within a designated frequency band. Let $D$ be the number of transmit antennas at the AP. Suppose the system is assisted by $K$ RISs, where the $k$-th RIS consists of $n_k$ reflecting units. The total number of RIS units is given by $\sum_{k=1}^{K}n_k=N$. The AP employs a pre-designed active beamforming vector $\mathbf{w}$ for precoding, and the transmitted signal $s$ follows a zero-mean, unit-variance distribution. Due to severe path loss, multi-bounce reflections at the RIS are assumed to be negligible and are thus omitted from the analysis. The reflection link is akin to the dyadic backscatter channel in radio frequency identification communications~\cite{boyer2013invited}, specifically where the signal is emitted from the RIS  as a point source signal to the user, with an additional phase attached. Consequently, the equivalent baseband channel for the reflected link from RIS $k$ to user $m$ can be decomposed into the AP-RIS link, RIS reflection phase shift, and RIS-user link, denoted by $\mathbf{G}_{k}\in \mathbb{C}^{n_k\times D}$, $\mathbf{\Omega}_{k}=\mathrm{diag}(e^{j\Omega_{1}^k},e^{j\Omega_2^k},\cdots,e^{j\Omega_{n_k}^k}) \in \mathbb{C}^{n_k\times n_k}$ and $\mathbf{h}_{k,m}^\mathrm{H} \in \mathbb{C}^{1\times n_k}$. 
 \begin{figure}[htbp]
    \centering
    \includegraphics[width=3.4in]{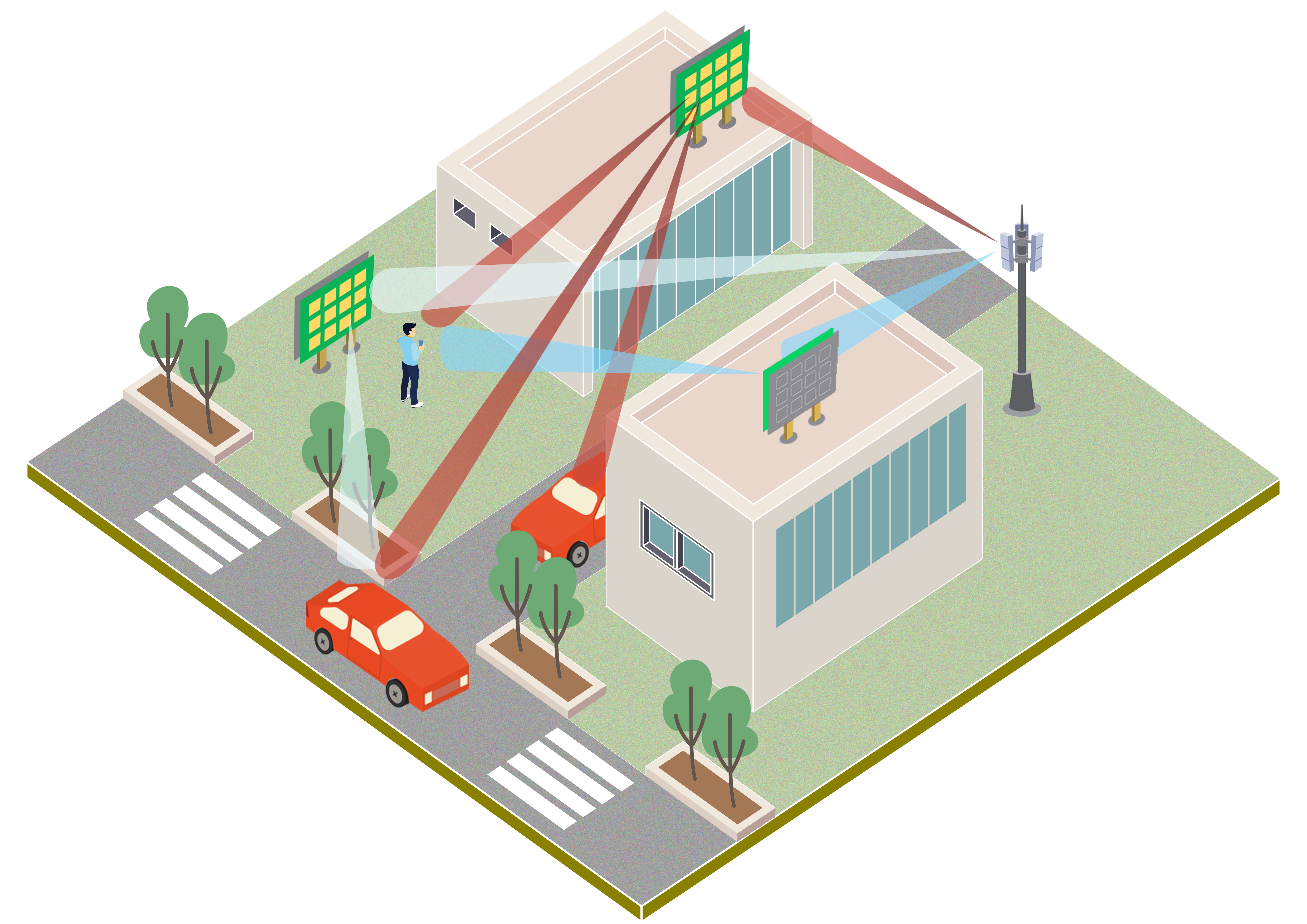}
    \caption{multi-RIS-assisted multi-user communication.}
    \label{1}
  \end{figure}

 As previously studied in \cite{10260279}, when the RIS $k$ is illuminated by the AP in the far field, as shown in Fig~\ref{2}, the signal $y_{k,m}$ received by the user $m$ from RIS $k$ can be expressed as given in (1), where $\mathbf{u}(\theta,\phi)=[\sin\theta \cos\phi, \sin\theta \sin\phi, \cos\theta]^\mathrm{T}$, $(r_{k,m}^s, \theta_{k,m}^s, \phi_{k,m}^s)$ and $(r_{k,d}^i, \theta_{k,d}^i, \phi_{k,d}^i)$ denote the polar coordinates of the user $m$ and the $d$-th antenna of the AP relative to RIS $k$, respectively. $\mathbf{p}_{k,n}$ represents the position of the $n$-th unit of RIS $k$, $\xi$ denotes the scattering pattern of the isotropic , and $\lambda$ is the wavelength of the electromagnetic wave. Therefore, the received signal at the user $m$ can be expressed as
 \begin{figure*}[!htbp]
    \begin{equation}\label{E:MIMO}
      \begin{aligned}
          y_{k,m} 
        = &  \overbrace{
        \xi \frac{ e^{-j 2 \pi r_{k,m}^s / \lambda }}{ r_{k,m}^s } 
        \begin{bmatrix}
          e^{ j 2 \pi \mathbf{p}_{k,1}^{\mathrm{T}} \mathbf{u} (\theta_{k,m}^s, \phi_{k,m}^s)  / \lambda } & \cdots & e^{ j 2 \pi \mathbf{p}_{k,n_k}^{\mathrm{T}} \mathbf{u} (\theta_{k,m}^s, \phi_{k,m}^s)  / \lambda } 
        \end{bmatrix} }^{\mathbf{h}_{k,m}^\mathrm{H}}
        \overbrace{\begin{bmatrix}
          e^{j \Omega_{1}^k} &        &  0 \\
                                 & \ddots &    \\
              0                  &        & e^{j \Omega_{n_k}^k}
        \end{bmatrix}}^{\mathbf{\Omega}_k} \\
        & \underbrace{\begin{bmatrix}
          e^{ j 2 \pi \mathbf{p}_{k,1}^{\mathrm{T}} \mathbf{u} (\theta_{k,1}^i, \phi_{k,1}^i)  / \lambda }  & \cdots  & e^{ j 2 \pi \mathbf{p}_{k,1}^{\mathrm{T}} \mathbf{u} (\theta_{k,D}^i, \phi_{k,D}^i)  / \lambda } \\
          \vdots  & \ddots & \vdots \\
          e^{ j 2 \pi \mathbf{p}_{k,n_k}^{\mathrm{T}} \mathbf{u} (\theta_{k,1}^i, \phi_{k,1}^i)  / \lambda }  & \cdots  & e^{ j 2 \pi \mathbf{p}_{k,n_k}^{\mathrm{T}} \mathbf{u} (\theta_{k,D}^i, \phi_{k,D}^i)  / \lambda }\\
        \end{bmatrix}
        \begin{bmatrix}
          \frac{ e^{-j 2 \pi r_{k,1}^i / \lambda } }{ r_{k,1}^i } &        &  0 \\
                                 & \ddots &    \\
              0                  &        &  \frac{ e^{-j 2 \pi r_{k,D}^i / \lambda } }{ r_{k,D}^i }
        \end{bmatrix}}_{\mathbf{G}_k}
       \mathbf{w}s
      \end{aligned}
    \end{equation}
    \medskip
    \hrule
\end{figure*}
 \begin{equation}
     y_m = (\mathbf{g}_m^\mathrm{H}+\sum_{k=1}^{K}\mathbf{h}_{k,m}^\mathrm{H} \mathbf{\Omega}_{k} \mathbf{G}_{k})\mathbf{w}s + z_m,
 \end{equation}
where $z\sim \mathcal{CN}(0,\sigma_m^2)$ denotes the additive white Gaussian noise (AWGN) at the user $m$ and $\mathbf{g}_m^\mathrm{H} \in \mathbb{C}^{1\times D}$ represents the equivalent baseband model of the direct link from the AP to user $m$. Correspondingly, the Signal-to-Noise Ratio (SNR) at the user $m$ is given by
\begin{equation}
    \mathrm{SNR}_m = \frac{|(\mathbf{g}_m^\mathrm{H}+\sum_{k=1}^{K}\mathbf{h}_{k,m}^\mathrm{H} \mathbf{\Omega}_{k} \mathbf{G}_{k})\mathbf{w}|^2}{\sigma_m^2}.
    \label{e2}
\end{equation}
\begin{figure}[htbp]
    \centering
    \includegraphics[width=3.4in]{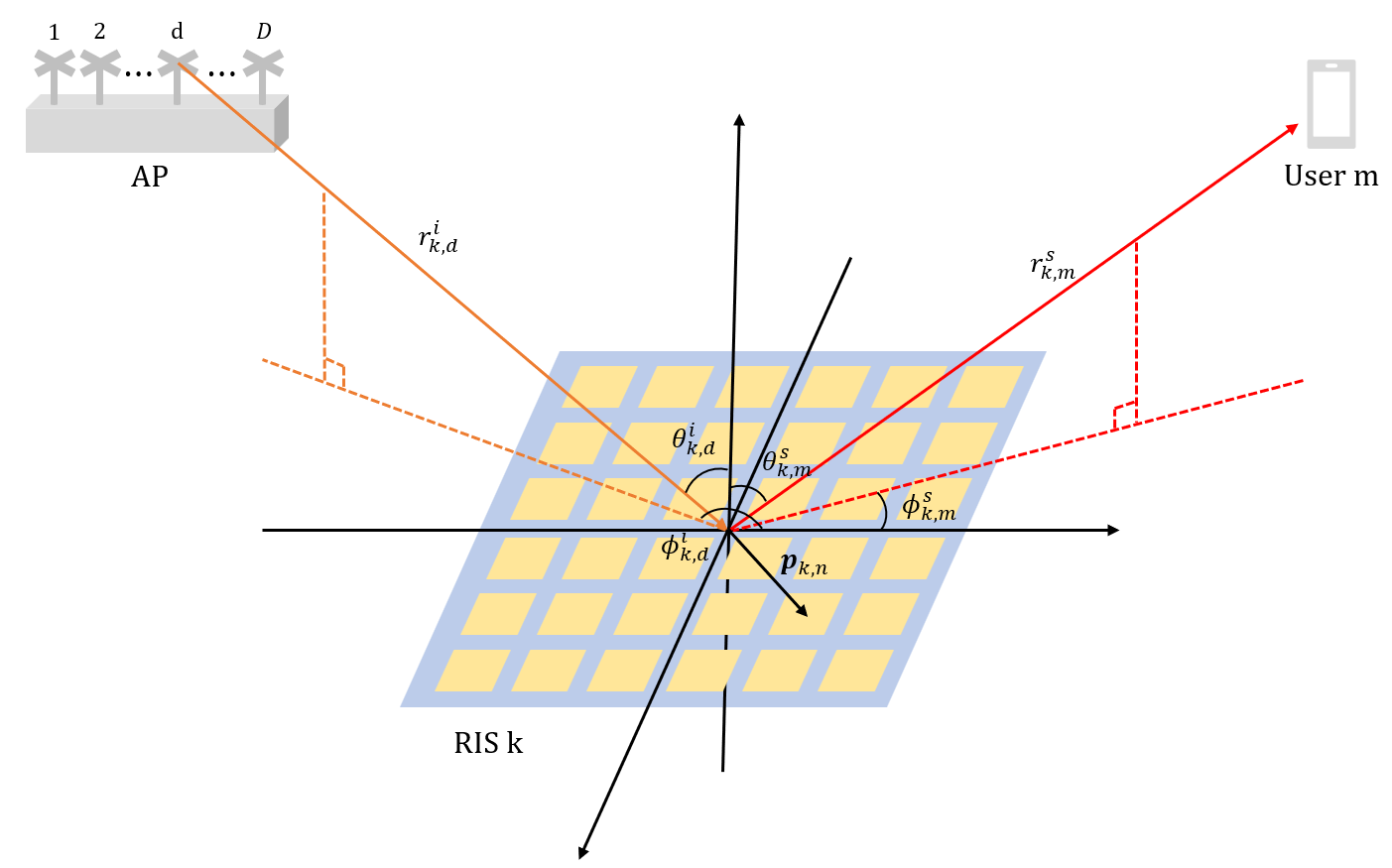}
    \caption{schematic diagram of the radiation model}
    \label{2}
  \end{figure}

\subsection{Beamforming Problem Formulation}

Our objective is to jointly optimize the beamforming of the AP and the RIS, aiming to minimize the AP’s transmit power while ensuring that the user's average SNR meets a certain threshold. Specifically, we seek to solve the following problem
\begin{equation}
    \begin{aligned}
        (\mathrm{P1}): \ &\min_{\mathbf{w},\mathbf{\Omega}_1,\mathbf{\Omega}_2,\cdots,\mathbf{\Omega}_K}||\mathbf{w}||^2\\
        &\mathrm{s.t.} \ \frac{1}{M}\sum_{m=1}^M\mathrm{SNR}_m \ge \gamma,\\
    \end{aligned}
\end{equation}
where $\gamma$ denotes the lower bound of the user's average SNR. We define the augmented vector $\tilde{\mathbf{v}}=[\mathbf{v}^\mathrm{T},1]^\mathrm{T}$, and channel matrix $\tilde{\mathbf{R}}_m^\mathrm{H}=[\mathbf{R}_m^\mathrm{H};\mathbf{g}_m^\mathrm{H}]$, where $\mathbf{v}=[e^{j\theta_1},e^{j\theta_2},\cdots,e^{j\theta_N}]^\mathrm{T}=[e^{j\Omega_1^1},\cdots,e^{j\Omega_{n_1}^1},e^{j\Omega_1^2},\cdots,e^{j\Omega_{n_K}^K}]^\mathrm{T}\in\mathbb{C}^{N\times 1}$ and $\mathbf{R}_m^\mathrm{H} = \mathrm{diag}([\mathbf{h}_{1,m}^\mathrm{H},\mathbf{h}_{2,m}^\mathrm{H},\cdots,\mathbf{h}_{K,m}^\mathrm{H}])\cdot[\mathbf{G}_1^\mathrm{T},\mathbf{G}_2^\mathrm{T},\cdots,\mathbf{G}_K^\mathrm{T}]^\mathrm{T}\in\mathbb{C}^{N\times D}$. Therefore, we can express $|(\mathbf{g}_m^\mathrm{H}+\sum_{k=1}^{K}\mathbf{h}_{k,m}^\mathrm{H} \mathbf{\Omega}_{k} \mathbf{G}_{k})\mathbf{w}|^2=|\tilde{\mathbf{v}}^\mathrm{H}\tilde{\mathbf{R}}_m^\mathrm{H}\mathbf{w}|^2$, and the problem (P1) can be reformulated as
\begin{equation}
    \begin{aligned}
        (\mathrm{P2}): \ &\min_{\mathbf{w},\mathbf{v}}||\mathbf{w}||^2\\
        &\mathrm{s.t.} \ \frac{1}{M}\sum_{m=1}^M\frac{|\tilde{\mathbf{v}}^\mathrm{H}\tilde{\mathbf{R}}_m^\mathrm{H}\mathbf{w}|^2}{\sigma_m^2} \ge \gamma.\\
    \end{aligned}
\end{equation}
Given $\mathbf{v}$, the Lagrangian function associated with problem (P2) is expressed as:
\begin{equation}
    \mathcal{L}(\mathbf{w},\eta) = ||\mathbf{w}||^2 + \eta (M\gamma-\sum_{m=1}^M\frac{|\tilde{\mathbf{v}}^\mathrm{H}\tilde{\mathbf{R}}_m^\mathrm{H}\mathbf{w}|^2}{\sigma_m^2}),
\end{equation}
where $\eta$ is the Lagrange multiplier. Taking the partial derivative of $\mathcal{L}(\mathbf{w},\eta)$ with respect to $\mathbf{w}$ and setting it to zero yields:
\begin{equation}
    \mathbf{w}^* = \frac{\sqrt{P}\mathbf{q}}{\sigma_1^2\sigma_2^2\cdots\sigma_M^2}, \ \eta^* = \frac{\sigma_1^2\sigma_2^2\cdots\sigma_M^2}{\mu},
\end{equation}
where $\mathbf{q}$ is the unit eigenvector of the matrix $\sum_{m=1}^{M}(\prod_{i \ne m}^{M}\sigma_i^2)\tilde{\mathbf{R}}_m\tilde{\mathbf{v}}\tilde{\mathbf{v}}^\mathrm{H}\tilde{\mathbf{R}}_m^\mathrm{H}$, and $\mu$ is the corresponding eigenvalue. By substituting equation (7) into (P2), we transform problem (P2) into
\begin{equation}
    \begin{aligned}
        (\mathrm{P3}): \ &\min_{\mathbf{w},\mathbf{v}}P\\
        &\mathrm{s.t.} \ \frac{\mu P}{M \sigma_1^2\sigma_2^2\cdots\sigma_M^2} \ge \gamma.\\
    \end{aligned}
\end{equation}
By observation, the optimal solution to problem (P3) is obtained when $P = \frac{\gamma M \sigma_1^2\sigma_2^2\cdots\sigma_M^2}{\mu_{max}}$. Therefore, we can reformulate the problem as
\begin{equation}
    \begin{aligned}
        &\max_{\mathbf{v}}\mu_{max}(\sum_{m=1}^{M}(\prod_{i \ne m}^{M}\sigma_i^2)\tilde{\mathbf{R}}_m\tilde{\mathbf{v}}\tilde{\mathbf{v}}^\mathrm{H}\tilde{\mathbf{R}}_m^\mathrm{H})
    \end{aligned}
\end{equation}

In practical applications, the additional phase imposed by the reflection units of the RIS is generally limited to discrete values~\cite{xiong2023design}. We define the set $\Phi_n = \{\phi_1^{n},\phi_2^{n},\cdots,\phi_{b_n}^{n}\}$ to represent the optional phase set of $\theta_n$. Here, $b_n$ denotes the number of optional phase values for $\theta_n$, and $\phi_1^{n},\phi_2^{n},\cdots,\phi_{b_n}^{n}$ are distributed in ascending order within the range $[0,2\pi)$. 

Optimization problem (9) considers the case where a direct link exists between the BS and UE. In fact, as demonstrated in reference~\cite{xiong2022optimal}, the optimization problem formulation in the absence of a direct link remains the same, with the only difference being the length of the variables. Therefore, we focus on the scenario without a direct link and solve the simplified problem, as
\begin{equation}
    \begin{aligned}
        (\mathrm{P4}): \ &\max_{\mathbf{v}}\mu_{max}(\sum_{m=1}^{M}(\prod_{i \ne m}^{M}\sigma_i^2)\mathbf{R}_m\mathbf{v}\mathbf{v}^\mathrm{H}\mathbf{R}_m^\mathrm{H})\\
        &\mathrm{s.t. \ arg}\{v_n\}\in \Phi_n, \forall{n}.
    \end{aligned} 
\end{equation} 

In conventional phase configuration optimization, it is typically assumed that $\phi_1^{n},\phi_2^{n},\cdots,\phi_{b_n}^{n}$ are uniformly distributed within $[0,2\pi)$, and that $b_n$ remains consistent across all elements, an assumption widely adopted in previous studies. However, in practical scenarios, due to the manufacturing precision limitations of RIS hardware and the necessity to accommodate RIS elements with different bit resolutions, $\phi_1^{n},\phi_2^{n},\cdots,\phi_{b_n}^{n}$ may exhibit non-uniform distributions within $[0,2\pi)$, and $b_n$ may also vary. 

Under such non-uniform quantization conditions, the objective function exhibits highly complex nonlinear characteristics within the solution space, with its surface presenting multiple local optima and significant gradient variations between these points. This results in substantial fluctuations in the function values across different regions. The non-convexity and non-smoothness of the function make traditional optimization algorithms prone to becoming trapped in local optima, thereby hindering convergence to the global optimum. Furthermore, the high volatility of the function surface significantly increases the complexity of the search space, leading to slower convergence rates and substantially higher computational costs for global optimization algorithms. Consequently, there is an urgent need to develop novel optimization methods capable of effectively addressing the challenges posed by non-uniform quantization. Such methods should aim to enhance optimization performance, improve the quality of global solutions, and ensure computational efficiency to meet the demands of practical systems.

\section{Global optimum for single-user scenarios}\label{Section3}

In this section, we further analyze the problem (P4) through transformation. For the single-user scenario, the problem can be equivalently reformulated as a constrained quadratic function maximization problem. To address this, we propose a PAT-based algorithm to obtain the global optimal solution efficiently. Specifically, by introducing auxiliary variables to construct mapping relationships between variables, we reduce the original optimization problem from a high-dimensional space to a low-dimensional space for solving. With this mapping relationship, we can partition the space where auxiliary variables reside into several subregions, with each subregion corresponding to a solution. Finally, by traversing these subregions, we obtain the global optimal solution.

\subsection{Simplification Through Problem Reformulation}

When $M=1$, we can simplify problem (P4) to
\begin{equation}
    \begin{aligned}
        &\max_{\mathbf{v}}\mu_{max}(\mathbf{R}\mathbf{v}\mathbf{v}^\mathrm{H}\mathbf{R}^\mathrm{H})\\
        &\mathrm{s.t. \ arg}\{v_n\}\in \Phi_n, \forall{n}.
    \end{aligned}
\end{equation} 
Since $\mathbf{R}\mathbf{v} \in \mathbb{C}^{D \times 1}$, $\mathbf{R}\mathbf{v}\mathbf{v}^\mathrm{H}\mathbf{R}^\mathrm{H}$ has only one non-zero eigenvalue, which is $\mathbf{v}^\mathrm{H}\mathbf{R}^\mathrm{H}\mathbf{R}\mathbf{v}$. Therefore, the problem is equivalent to
\begin{equation}
    \begin{aligned}
        (\mathrm{P5}): \ &\max_{\mathbf{v}}\mathbf{v}^\mathrm{H}\mathbf{R}^\mathrm{H}\mathbf{R}\mathbf{v}\\
        &\mathrm{s.t. \ arg}\{v_n\}\in \Phi_n, \forall{n}.
    \end{aligned}
\end{equation} 

(P5) falls into the category of common positive semidefinite quadratic maximization problems encountered in communication system design. Although existing studies~\cite{AVIS199621,doi:10.1137/0215024,FERREZ200535} have proposed various algorithms to solve such problems, their high computational complexity limits their feasibility in practical applications. Meanwhile, low-complexity algorithms such as CPP significantly reduce computational overhead but often suffer from unpredictable performance degradation due to their reliance on hard rounding. Moreover, the phase non-uniformity introduces irregularities in the problem's feasible region, further complicating the problem and exacerbating the losses due to discretization. To overcome these challenges, we propose a discrete global optimization algorithm in the next section.
\subsection{A Partition-and-Traversal Algorithm  For Positive Semidefinite Quadratic Maximization}

\subsubsection{An Equivalent Formulation for Positive Semidefinite Quadratic Maximization}

We know that $\mathbf{v}^\mathrm{H}\mathbf{R}^\mathrm{H}\mathbf{Rv}=||\mathbf{Rv}||^2$. By introducing the auxiliary variable $\overline{\mathbf{v}}=\begin{bmatrix} e^{j\varphi_1}\sin\vartheta_1 \\ e^{j\varphi_2}\cos\vartheta_1\sin\vartheta_2 \\ \vdots \\ e^{j\varphi_D}\cos\vartheta_1\cos\vartheta_2 \ldots \cos\vartheta_{D-1} \end{bmatrix}\in \mathbb{C}^{D\times1}$, and combining it with the Cauchy-Schwarz inequality, the equation finds 
\begin{equation}
    \Re\{\overline{\mathbf{v}}^\mathrm{H} \mathbf{Rv}\} \le ||\overline{\mathbf{v}}|| \cdot ||\mathbf{Rv}|| = ||\mathbf{Rv}||.
\end{equation} 
Therefore, (P5) is equivalent to
\begin{equation}
    \begin{aligned}\mathrm{(P6)}: \ &\max_{\mathbf{v}}\max_{\overline{\mathbf{v}}}\Re\{\overline{\mathbf{v}}^\mathrm{H} \mathbf{Rv}\}\\
        &\mathrm{s.t. \ arg}\{v_n\}\in \Phi_n, \forall{n},\\
        & \ \ \ \ ||\overline{\mathbf{v}}|| = 1.
    \end{aligned}
    \label{e8}
\end{equation} 
\vspace{0.1in}
Let $\mathbf{a}=\overline{\mathbf{v}}^\mathrm{H} \mathbf{R} \in \mathbb{C}^{1\times N}$, and $a_n = |a_n|e^{-j\tau_n}$ represent the $n$-th element in $\mathbf{a}$. Then (P6) can be rewritten as:
\begin{equation}
    \begin{aligned}
        &\max_{\mathbf{v}}\max_{\overline{\mathbf{v}}}\Re\{\overline{\mathbf{v}}^\mathrm{H} \mathbf{Rv}\}\\
        =&\max_{\overline{\mathbf{v}}}\max_{\mathbf{v}}\Re\{\overline{\mathbf{v}}^\mathrm{H} \mathbf{Rv}\}\\
        =&\max_{\overline{\mathbf{v}}}\max_{\mathbf{v}}\sum_{n=1}^N\Re \{|a_n|e^{j(\theta_n-\tau_n)}\}\\ 
        =&\max_{\overline{\mathbf{v}}}\max_{\mathbf{v}}\sum_{n=1}^N|a_n|\cos(\theta_n-\tau_n).
    \end{aligned}
    \label{e9}
\end{equation}

\subsubsection{Subregion Partitioning}

From (15), it can be observed that when $\overline{\mathbf{v}}$ is determined, the condition for maximizing $\Re\{\overline{\mathbf{v}}^\mathrm{H} \mathbf{Rv}\}$ is given by
\begin{equation}
    \theta_n(\overline{\mathbf{v}})=\arg \min_{\theta_n \in \Phi_n}|(\theta_n-\tau_n)\mod 2\pi|.
    \label{e10}
\end{equation}
We define the set $\Psi_n=\{\psi_1^n,\psi_2^n,\cdots,\psi_{b_n}^n\}$, with 
\begin{equation}
    \psi_i^n=\phi_i^n+\frac{1}{2}[(\phi_{i+1}^n-\phi_i^n)\mod 2\pi],
    \label{e11}
\end{equation}
where the index $i$ is interpreted modulo $b_n$. Then, the maximization condition (16) can be rewritten as
\begin{equation}
    \theta_n(\overline{\mathbf{v}})=\phi_i^n, \ \ \mathrm{if} \ \tau_n \in (\psi_{i-1}^n,\psi_i^n)
    \label{e12}
\end{equation}
Based on condition (18), we can establish a mapping relationship between $\overline{\mathbf{v}}$ and $\mathbf{v}$. Then, let $\tau_n$ take the boundary value from equation (18), we can establish boundary surface equations to partition the high-dimensional space where $\overline{\mathbf{v}}$ resides into several subregions, each corresponding to a unique $\mathbf{v}$, as illustrated in Fig~\ref{3}. As there must exist a subregion corresponding to the global optimal solution of (P6), traversing all subregions allows us to obtain the global optimal solution $\mathbf{v}^*$. Since the dimensionality of $\overline{\mathbf{v}}$ is lower than that of $\mathbf{v}$, the number of subregions we partition is significantly lower than the original spatial scale of $\mathbf{v}$. This characteristic leads to a substantial reduction in computational complexity.

The boundary surface equation can be expressed as:
\begin{equation}
    \mathbf{R}_n^\mathrm{H}\overline{\mathbf{v}}=ae^{j\omega_n},
    \label{e13}
\end{equation}
where $\mathbf{R}_n^\mathrm{H}$ denotes the $n$-th row of $\mathbf{R}^\mathrm{H}$, $a$ is any positive real number, and $\omega_n\in\Psi_n$.
\begin{figure}
  \centering
  \includegraphics[width=3.4in]{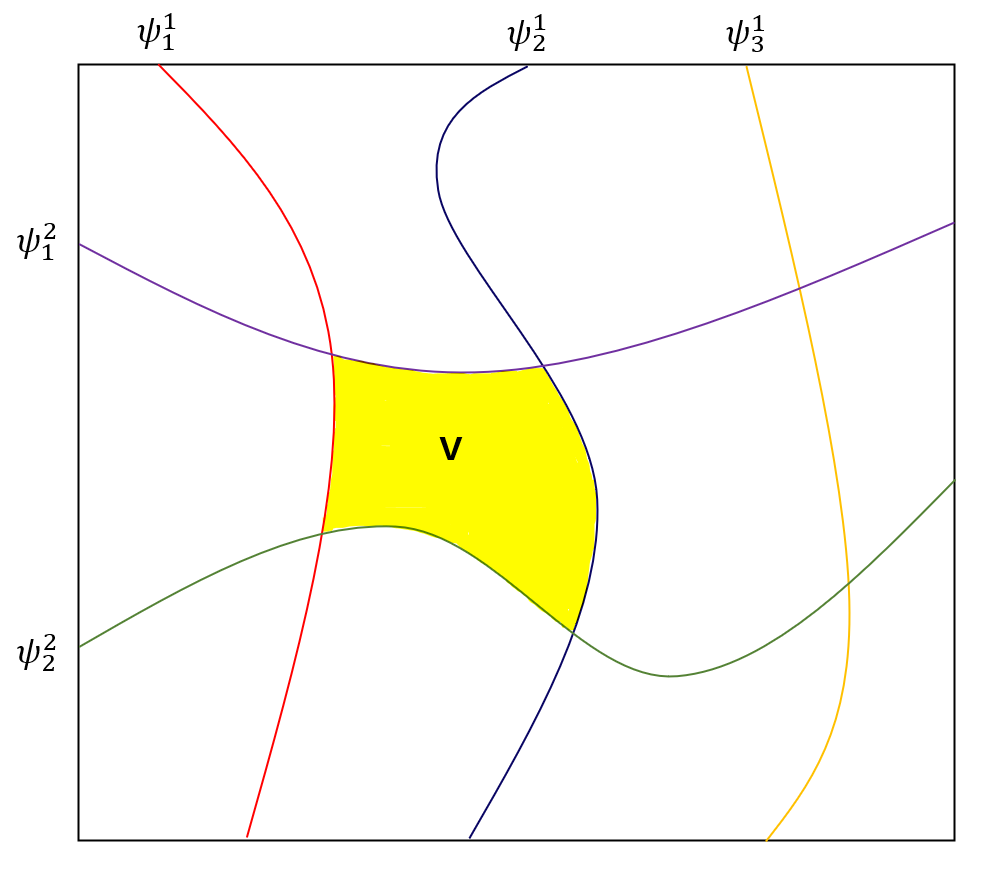}
  \caption{The spatial partitioning of $\overline{\mathbf{v}}$ results in each subregion corresponding to a distinct $\mathbf{v}$.}
 \label{3}
\end{figure}

\subsubsection{Determination and Traversal of subregions }

After partitioning the subregions, we employ the method of finding intersection points to differentiate and determine the subregions. Since $\overline{\mathbf{v}}$ has $2D-1$ variables, we require $2D-1$ boundary surface equations to determine an intersection point. We define the set of indices for the selected $2D-1$ equations as $I=\{i_1,i_2,\cdots,i_{2D-1}\}\subset\{1,2,\cdots,2D-1\}$, with corresponding boundary values set as $\Omega=\{\omega_{i_1},\omega_{i_2},\cdots,\omega_{i_{2D-1}}\}$. Then, the equation for the intersection point is
\begin{equation}
    \begin{bmatrix} \mathbf{R}_{i_1}^\mathrm{H} \\ \mathbf{R}_{i_2}^\mathrm{H} \\ \vdots \\ \mathbf{R}_{i_{2D-1}}^\mathrm{H} \end{bmatrix}\overline{\mathbf{v}}=\begin{bmatrix} a_1e^{j\omega_{i_1}} \\ a_2e^{j\omega_{i_2}} \\ \vdots \\ a_{2D-1}e^{j\omega_{i_{2D-1}}} \end{bmatrix}.
    \label{e14}
\end{equation}
Through fractional simplification, we obtain
\begin{equation}
    \begin{bmatrix} \mathbf{R}_{i_1}^\mathrm{H}e^{-j\omega_{i_1}} \\ \mathbf{R}_{i_2}^\mathrm{H}e^{-j\omega_{i_2}} \\ \vdots \\ \mathbf{R}_{i_{2D-1}}^\mathrm{H}e^{-j\omega_{i_{2D-1}}} \end{bmatrix}\overline{\mathbf{v}}=\begin{bmatrix} a_1 \\ a_2 \\ \vdots \\ a_{2D-1} \end{bmatrix}.
    \label{e15}
\end{equation}
Let $\mathbf{C}=\begin{bmatrix} \mathbf{R}_{i_1}^\mathrm{H}e^{-j\omega_{i_1}} \\ \mathbf{R}_{i_2}^\mathrm{H}e^{-j\omega_{i_2}} \\ \vdots \\ \mathbf{R}_{i_{2D-1}}^\mathrm{H}e^{-j\omega_{i_{2D-1}}} \end{bmatrix}$. Thus, equation (21) is equivalent to
\begin{subequations}
    \begin{align}
        \begin{bmatrix} \Re\{\mathbf{C}\} & \Im\{\mathbf{C}\} \end{bmatrix} \begin{bmatrix} \Im\{\overline{\mathbf{v}}\} \\ \Re\{\overline{\mathbf{v}}\} \end{bmatrix}=0 \\
        \begin{bmatrix} \Re\{\mathbf{C}\} & -\Im\{\mathbf{C}\} \end{bmatrix} \begin{bmatrix} \Re\{\overline{\mathbf{v}}\} \\ \Im\{\overline{\mathbf{v}}\} \end{bmatrix}>0
    \end{align}
\end{subequations}

By solving equation (22), we can determine an intersection point $\overline{\mathbf{v}}$ and compute the corresponding $\mathbf{v}$ based on equation (18), thereby determining and distinguishing subregions. Let $V$ denote the set of recovered $\mathbf{v}$. It should be noted that the recovery of $\theta_n$ for $n\in I$ is uncertain due to $\tau_n$ being on the boundary. To address this issue, we employ a traversal method. Assuming $\tau_n=\psi_i^n$, we include both cases where $\theta_n$ equals $\phi_i^n$ or $\phi_{i+1}^n$ in $V$. Therefore, each intersection point $\overline{\mathbf{v}}$ will yield $2^{2D-1} \ \mathbf{v}$. Interestingly, the set of $2^{2D-1} \ \mathbf{v}$ obtained actually corresponds to the $\mathbf{v}$ associated with all subregions connected to $\overline{\mathbf{v}}$. As shown in Fig~\ref{4}, by identifying one intersection point $\overline{\mathbf{v}}$, we can determine $\mathbf{v}_1, \mathbf{v}_2, \mathbf{v}_3$ and $\mathbf{v}_4$, which correspond to the subregions connected to the intersection point. Thus, by identifying all intersection points, we can determine all subregions and subsequently search for the global optimal solution. Finally, traversing $V$ yields the global optimal solution $\mathbf{v}^*$.
\begin{figure}
  \centering
  \includegraphics[width=3.4in]{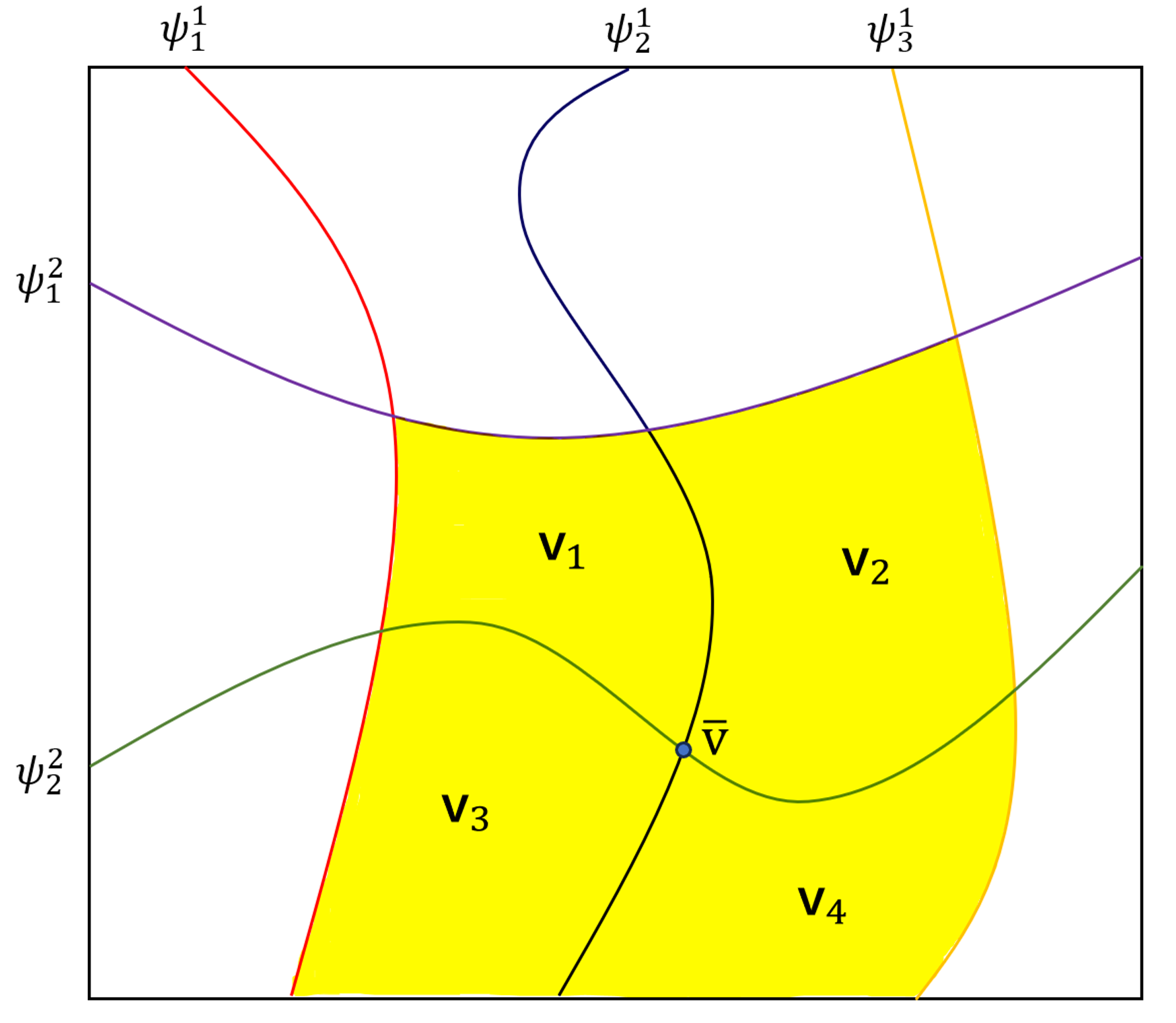}
  \caption{An intersection point can determine all the subregions connected to it.}
 \label{4}
\end{figure}
The proposed algorithm is depicted in Algorithm~\ref{alg1}.
\begin{algorithm}
 
\caption{PAT algorithm for Positive Semidefinite Quadratic Maximization} 
\label{alg1} 
\begin{algorithmic}[1] 
\REQUIRE Complex matrix $\mathbf{R}$, set $\Phi_n$, $n=1,2\cdots,N$. 
\ENSURE Optimal discrete phase configurations  $\mathbf{v}^*$ 
\STATE Generate the boundary point set $\Psi_n$ as in (17).
\REPEAT
\STATE Select intersection equations and solve them according to equation (22a) to obtain $\overline{\mathbf{v}}$.
\STATE Check if either $\overline{\mathbf{v}}$ or $-\overline{\mathbf{v}}$ satisfies equation (22b). 
\IF{Satisfying equation (22b)}
\STATE Recovering the corresponding $\mathbf{v}$ according to equation (18) and adding it to $V$.
\ENDIF
\UNTIL {All intersection points have been identified.}
\STATE Traverse $V$ to obtain the optimal solution $\mathbf{v}^*$.
\RETURN $\mathbf{v}^*$.
\end{algorithmic} 
\end{algorithm}

\section{Evolutionary Algorithm for Multi-user Scenarios}\label{Section4}

In this section, we extend the analysis to the multi-user scenario. Unlike the single-user case, Problem (P4) in the multi-user context cannot be fully reformulated as a quadratic problem, and its scale increases significantly, making the direct application of the PAT algorithm computationally prohibitive for obtaining the global optimal solution. To address this, we first perform an approximate transformation of the problem and derive a lower bound for the original problem. Building on this, we propose an evolutionary version of the PAT algorithm, referred to as the Efficient-Partition-and-Traversal (E-PAT) algorithm, which effectively resolves the aforementioned challenges. 

\subsection{Problem reformulation}

It is well known that for any $N\times N$ matrix $\mathbf{A}$, the maximum eigenvalue satisfies $\mu_{max}(\mathbf{A}) \ge \frac{1}{N}\mathrm{Tr}(A)$. Hence, we obtain
\begin{equation}
    \begin{aligned}
        &\mu_{max}(\sum_{m=1}^{M}(\prod_{i \ne m}^{M}\sigma_i^2)\mathbf{R}_m\mathbf{v}\mathbf{v}^\mathrm{H}\mathbf{R}_m^\mathrm{H}) \\
        &\ge \frac{1}{D}\mathrm{Tr}(\sum_{m=1}^{M}(\prod_{i \ne m}^{M}\sigma_i^2)\mathbf{R}_m\mathbf{v}\mathbf{v}^\mathrm{H}\mathbf{R}_m^\mathrm{H}) \\
        & = \frac{1}{D}\sum_{m=1}^{M}\mathrm{Tr}((\prod_{i \ne m}^{M}\sigma_i^2)\mathbf{R}_m\mathbf{v}\mathbf{v}^\mathrm{H}\mathbf{R}_m^\mathrm{H}) \\
        & = \frac{1}{D}\sum_{m=1}^{M}(\prod_{i \ne m}^{M}\sigma_i^2)\mathbf{v}^\mathrm{H}\mathbf{R}_m^\mathrm{H}\mathbf{R}_m\mathbf{v} \\
        & = \frac{1}{D}\mathbf{v}^\mathrm{H}(\sum_{m=1}^{M}(\prod_{i \ne m}^{M}\sigma_i^2)\mathbf{R}_m^\mathrm{H}\mathbf{R}_m)\mathbf{v}. \\
    \end{aligned}
\end{equation}
Let $\sum_{m=1}^{M}(\prod_{i \ne m}^{M}\sigma_i^2)\mathbf{R}_m^\mathrm{H}\mathbf{R}_m = \mathbf{T}^\mathrm{H}\mathbf{T}$, where $\mathbf{T} \in \mathbb{C}^{MD\times N}$, then (P4) can be simplified to
\begin{equation}
    \begin{aligned}
        &\max_{\mathbf{v}}\mathbf{v}^\mathrm{H}\mathbf{T}^\mathrm{H}\mathbf{T}\mathbf{v}\\
        &\mathrm{s.t. \ arg}\{v_n\}\in \Phi_n, \forall{n}.
    \end{aligned}
\end{equation} 
Through the series of transformations described above, we have converted the problem (P4) into a semidefinite quadratic maximization problem, thereby deriving an upper bound for the transmission power. Compared to the single-user case, the rank of $\mathbf{T}$ is higher than that of $\mathbf{R}$ in (P5). As a result, although the PAT algorithm can still achieve the global optimal solution, its computational complexity reaches $O(N^{2MD-1})$. Based on this observation, in the next subsection, we propose a E-PAT algorithm, an improvement upon the PAT algorithm, which strikes a balance between computational cost and performance.

\subsection{A Efficient-Partition-and-Traversal Algorithm For Positive Semidefinite Quadratic Maximization}

First, let us review the workflow of the PAT algorithm. For a quadratic problem with rank $MD$, the PAT algorithm introduces an auxiliary variable $\overline{\mathbf{v}}$, transforming the problem of solving $\mathbf{v}$ into solving $\overline{\mathbf{v}}$. It then establishes boundary equation systems using Equation (18) to partition the space where $\overline{\mathbf{v}}$ resides. By traversing the intersection points of these boundaries, the subregion related to the optimal solution can be identified, leading to the optimal solution. Thus, the key to the algorithm is finding the optimal subregion after partitioning. The PAT algorithm achieves this by finding the intersection points of $(2MD-1)$ boundary equations.

Thus, the computational overhead of the PAT algorithm primarily arises from solving equation systems. Reducing the number of equations in each system can significantly decrease the number of systems that need to be solved, thereby reducing the algorithm's computational cost. Suppose the number of equations per system is reduced to $d$, then each system can yield $(2MD-d)$ orthogonal points on the $(2MD-1-d)$-dimensional boundary. As shown in Figure 5, during the search process of the E-PAT algorithm, unlike the PAT algorithm, E-PAT can find multiple points on the boundary instead of just one intersection point. If one of these points is connected to the optimal subregion (e.g., the red point in the figure), E-PAT can locate the optimal solution.
\begin{figure}
    \centering
    \includegraphics[width=3.4in]{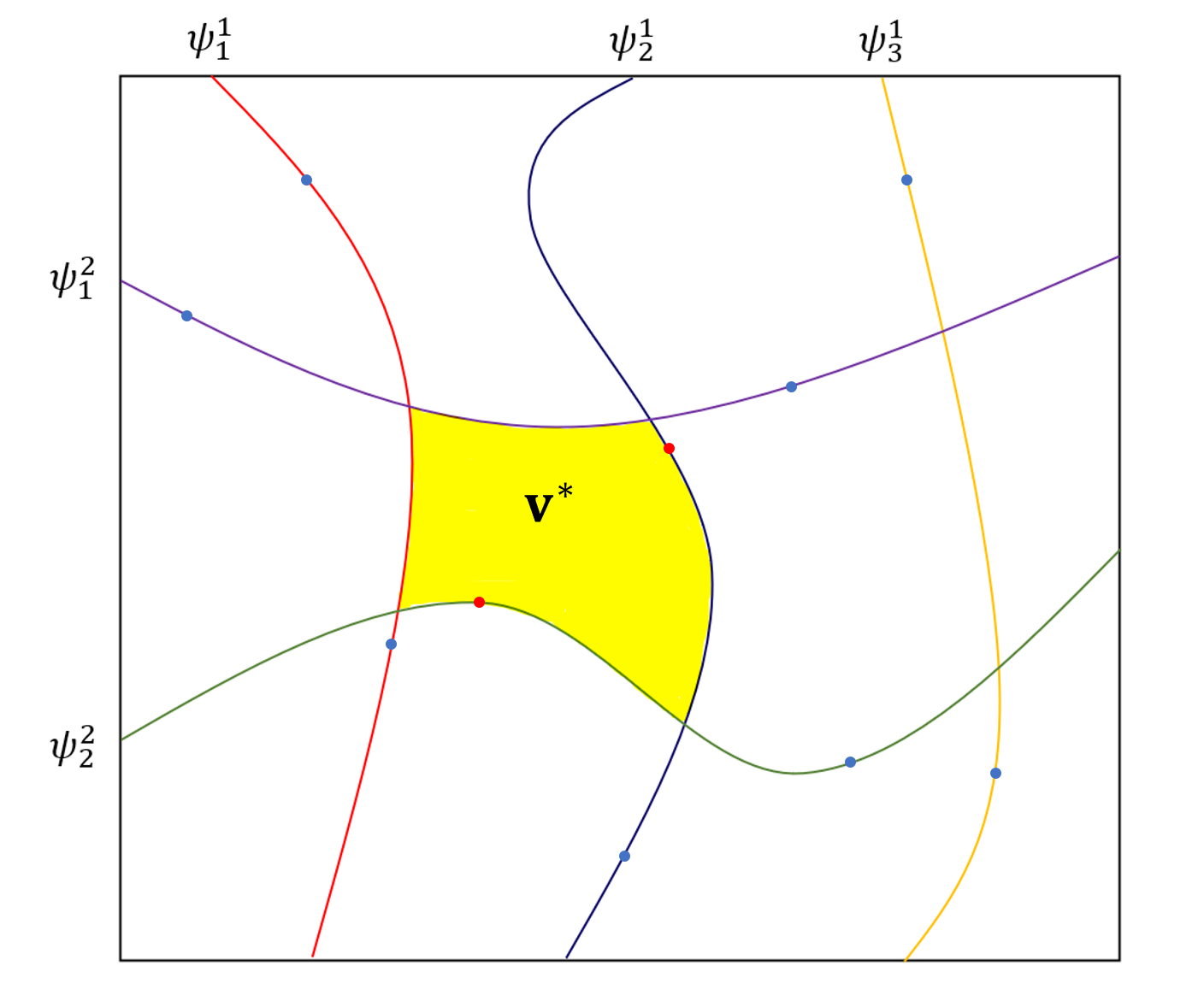}
    \caption{The search process of the E-PAT algorithm (with the red dot indicating the optimal solution).}
   \label{5}
  \end{figure}
\begin{figure}
    \centering
    \includegraphics[width=3.4in]{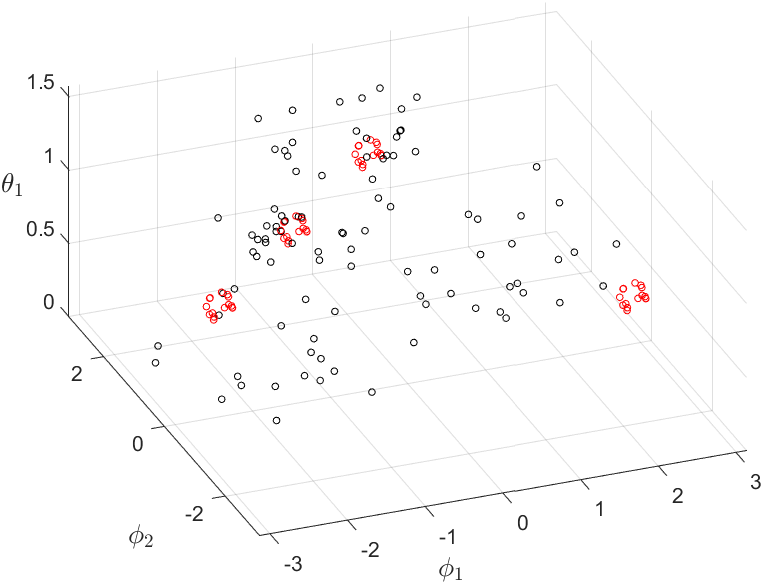}
    \caption{The search result of the E-PAT algorithm.}
   \label{6}
  \end{figure}
To intuitively demonstrate the feasibility of the E-PAT algorithm, we visualize search result of the E-PAT algorithm under the condition $MD=d=2$, as shown in Figure 6. The figure highlights the vertices of the optimal subregion (in red) and the points corresponding to the solutions provided by the algorithm (in black). It can be observed that the black points consistently approach the red points, indicating that the algorithm can find the optimal solution with a certain probability $p_c$, or provide a near-optimal solution. From Figure 5, it is evident that not every solution of the equation system is valid. Only when the boundary corresponding to the equation system is connected to the optimal subregion do these solutions hold significance. By combining the arithmetic mean-geometric mean inequality, we obtain 
\begin{equation}
    p_c \ge 1-p_e^B,
\end{equation}
where $p_e$ represents the average probability of failing to obtain the optimal solution in a single iteration of the equation system resolution process, and $B$ represents the number of $(2MD-1-d)$-dimensional boundaries connected to the optimal subregion. Since the simplex is the convex polytope with the smallest number of vertices, we can derive
\begin{equation}
    B \ge \left(\begin{matrix}2MD\\2MD-d\\\end{matrix}\right),
\end{equation}
where $\left(\begin{matrix}N\\n\\\end{matrix}\right)$ denotes the binomial coefficient, defined as the number of ways to choose $n$ elements from a set of $N$ elements.
\begin{Theorem}
    An $N$-dimensional simplex has at least $\left(\begin{matrix}N+1\\n+1\\\end{matrix}\right)$ $n$-dimensional faces.
\end{Theorem}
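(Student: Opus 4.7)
The plan is to argue directly from the combinatorial definition of a simplex as the convex hull of its vertices, exhibit an injection from $(n+1)$-subsets of the vertex set into the collection of $n$-dimensional faces, and then count. Since the inequality in the theorem is $\geq$ rather than $=$, I only need to produce $\binom{N+1}{n+1}$ distinct $n$-faces; I do not have to argue there are no others.

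First I would fix notation: let $S$ be an $N$-dimensional simplex, realised as the convex hull of $N+1$ affinely independent points $p_0, p_1, \ldots, p_N$. The key geometric fact I will use is that any subcollection of an affinely independent set is itself affinely independent; this is immediate because affine independence of $\{p_{i_0},\ldots,p_{i_k}\}$ is equivalent to linear independence of the $k$ difference vectors $p_{i_j}-p_{i_0}$, a property inherited by subsets.

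Next, for every index set $I = \{i_0, i_1, \ldots, i_n\} \subseteq \{0,1,\ldots,N\}$ with $|I|=n+1$, define
\begin{equation*}
F_I \;=\; \mathrm{conv}\{p_{i_0}, p_{i_1}, \ldots, p_{i_n}\}.
\end{equation*}
By the observation above, the points in $F_I$ are affinely independent, so $F_I$ is an $n$-dimensional simplex and therefore an $n$-dimensional face of $S$. I would then show the map $I \mapsto F_I$ is injective: if $F_I = F_J$, then each $F_I$ has a unique extreme-point set, which is exactly $\{p_i : i \in I\}$, and since the $p_i$ are distinct we recover $I = J$. This gives at least $\binom{N+1}{n+1}$ distinct $n$-dimensional faces.

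The main obstacle, and it is a minor one, is justifying that the extreme points of $F_I$ are exactly the vertices indexed by $I$ (so that the map is injective and we are not overcounting). This follows because the $p_i$ are affinely independent, hence in particular in convex position, so none of them is a convex combination of the others; thus every $p_i$ with $i\in I$ is an extreme point of $F_I$, and every point of $F_I$ is a convex combination of these. Combining the injection with the count $\binom{N+1}{n+1}$ of $(n+1)$-subsets of an $(N+1)$-element set completes the proof. I would close with a brief remark that equality actually holds (every face of a simplex arises this way), which is why the simplex is the convex polytope minimising the face count, matching the use of this bound in the preceding discussion of $B$.
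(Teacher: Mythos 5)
Your argument is correct and follows essentially the same route as the paper's own proof: each choice of $n+1$ of the $N+1$ vertices yields a distinct $n$-dimensional face, giving the count $\left(\begin{matrix}N+1\\n+1\end{matrix}\right)$. You simply fill in the details (affine independence of vertex subsets and injectivity via extreme points) that the paper's brief proof leaves implicit.
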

\begin{proof}
    An $N$-dimensional simplex has at least $N+1$ vertices. To construct an $n$-dimensional face, we need to choose $n+1$ vertices from these $N+1$ vertices. Consequently, there are at least $\left(\begin{matrix}N+1\\n+1\\\end{matrix}\right)$ $n$-dimensional faces.
\end{proof}
The value of $p_e$ is difficult to analyze precisely. However, by solving the boundary equations, we obtain $(2MD-d)$ points and generate more points based on some strategy, thereby increasing the probability of finding the optimal solution. Investigating which strategies can achieve the lowest $p_e$ with the least complexity is a key direction for future research. Given that the focus here is on reducing algorithm complexity, we do not employ any specific strategy at this stage. Consequently, the lower bound for $p_c$ is
\begin{equation}
    p_{c,low} = 1-p_e^{\left(\begin{matrix}2MD\\2MD-d\\\end{matrix}\right)}.
\end{equation}

\section{Numerical Simulation And Analysis}\label{Section5}
\begin{figure}
    \centering
    \includegraphics[width=3.4in]{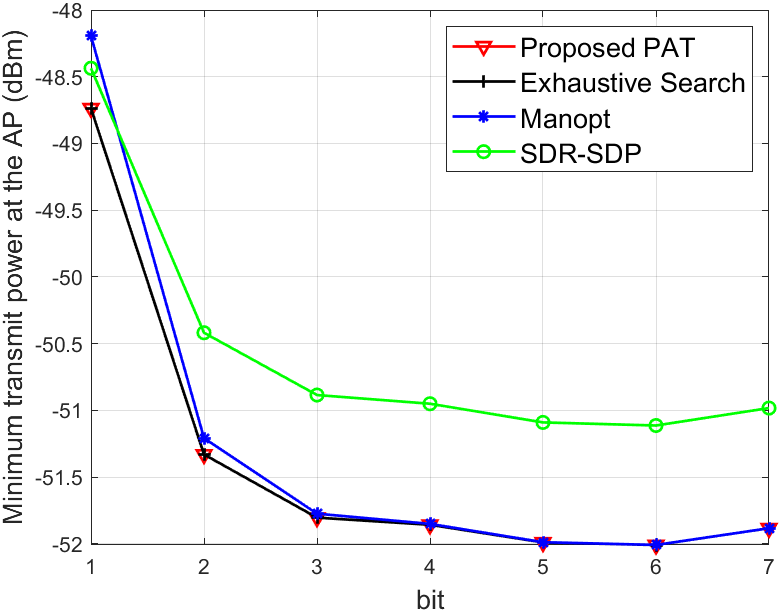}
    \caption{A comparison of minimum transmit power performance at different bit counts.}
   \label{7}
\end{figure}
\begin{figure}
    \centering
    \includegraphics[width=3.4in]{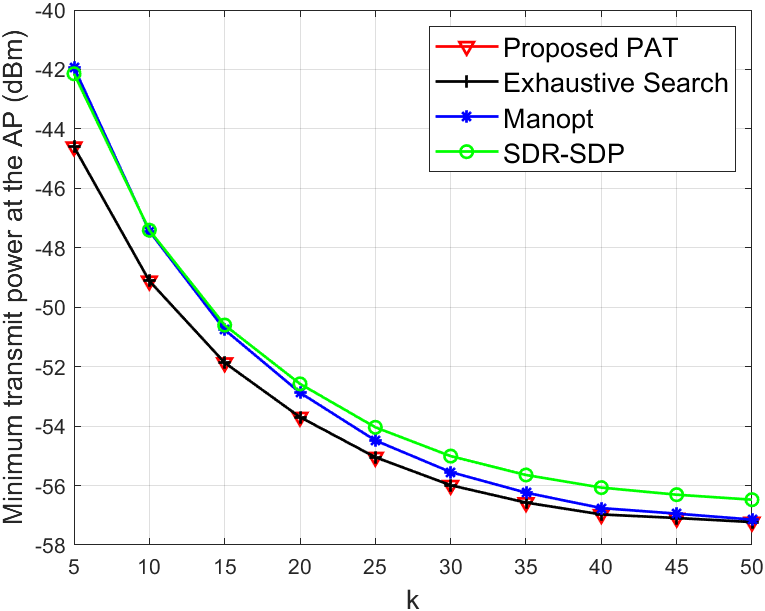}
    \caption{A comparison of minimum transmit power performance as function $k$.}
   \label{8}
  \end{figure}
In this section, we begin by analyzing the impact of non-uniform phase quantization on RIS performance through numerical simulations. Subsequently, we compare the proposed algorithm's performance with advanced methods (e.g., SDR-SDP and Manopt) and with exhaustive search, and conduct a complexity analysis of the algorithm. Notably, the continuous solutions obtained through SDR-SDP and Manopt methods are transformed into corresponding discrete solutions using the CPP method. Finally, based on numerical simulations, we provide strategies for selecting a proper parameter $d$ in the E-PAT algorithm.

The channel model parameter is based on independent and identically distributed (i.i.d.) Gaussian channels with zero mean and variance $\sigma_0^2$, i.e., $\mathbf{h}_{k,m}^\mathrm{H}, \mathbf{G}_k\sim \mathcal{CN}(0,\sigma_0^2)$. The lower bound of the user’s average SNR $\gamma$ is set to $40$ dBm and the background noise power level $\sigma_k^2$ is set to $-50$ dBm. Additionally,  we constrain RIS units to be either $1$-bit or $2$-bit, and randomly distribute $\Phi_n$ within $[0,2\pi)$.
\begin{figure}
    \centering
    \includegraphics[width=3.4in]{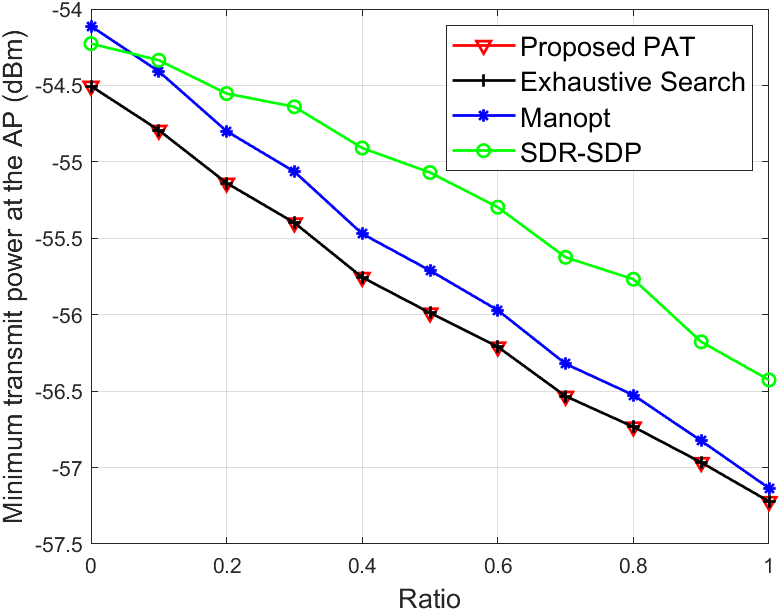}
    \caption{A comparison of transmit power performance with different ratios of 2-bit units among all units.}
   \label{9}
\end{figure}
\subsection{Analysis of the Impact of Non-Uniform Phase Quantization}

\subsubsection{The Impact of Phase Resolution}
We designed a set of experiments to compare the performance of traditional algorithms under varying phase resolutions, with the results presented in Fig~\ref{7}. When the phase resolution increases from 1-bit to 7-bit, the Manopt algorithm achieves near-optimal solutions at 4-bit. However, for 1-bit or 2-bit cases, the Manopt and SDR algorithms exhibit a deviation of approximately 1 dB compared to the proposed algorithm. This highlights the inherent performance loss associated with the mainstream approach of discretizing solutions obtained from continuous algorithms to generate discrete outputs.

From a theoretical perspective, the observed performance losses can be attributed to two primary factors. First, the discretization introduced during the optimization process alters the continuous gradient information of the objective function and constraints, leading to a degradation in the algorithm's convergence rate and overall efficiency. Second, the discretization of the solution space can result in the convergence of continuous algorithms to suboptimal, or potentially infeasible, solutions that do not adequately reflect the underlying problem structure. 

\subsubsection{Non-Uniformity at Unit Level}
We conduct an experiment to further analyze the impact of non-uniform quantization of phases. In the experimental setup, all units are configured with $2$-bit resolution, and the phase distribution is given by $\{0, \frac{k\pi}{20}, \frac{k\pi}{10}, \frac{3k\pi}{20}\}$. In Fig~\ref{8}, we compare the results of various algorithms for $k = 1, 2, \cdots, 10,$ where $k$ determines the gap between the four phases. As $k$ decreases, the phase distribution becomes more uneven. The results show that regardless of the value of $k$, the proposed algorithm achieves the same transmit power performance as the exhaustive search. When $k$ is large, meaning the phase distribution is nearly uniform, the Manopt algorithm performs similarly to the proposed algorithm. However, when $k$ is small, for instance, when $k=1$, the performance gap between the existing algorithms and the proposed algorithm exceeds 2 dB. This highlights the significant advantage of the proposed algorithm in handling unit-level phase non-uniformity compared to traditional algorithms.
\begin{figure}
    \centering
    \subfigure[]{
    \label{10-1}
    \includegraphics[width=0.48\linewidth]{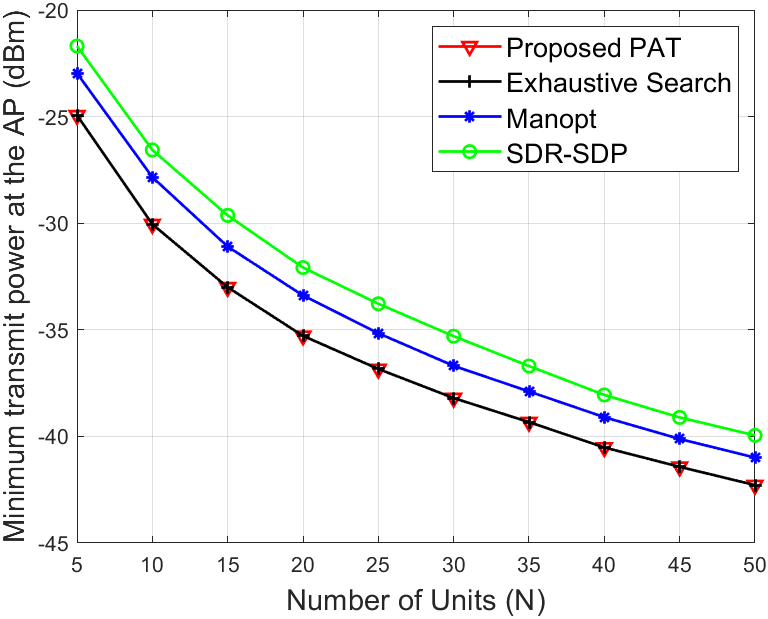}}
    \subfigure[]{
    \label{10-2}
    \includegraphics[width=0.48\linewidth]{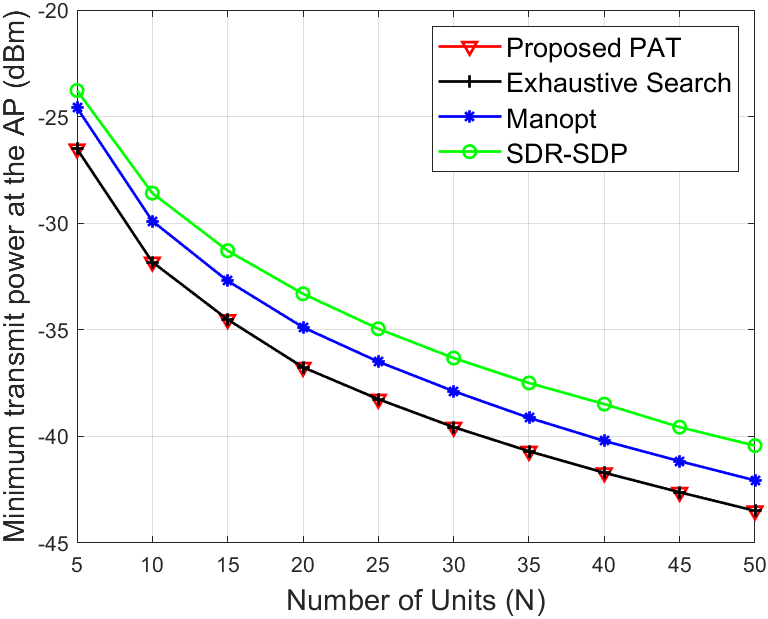}}
    \subfigure[]{
    \label{10-3}
    \includegraphics[width=0.48\linewidth]{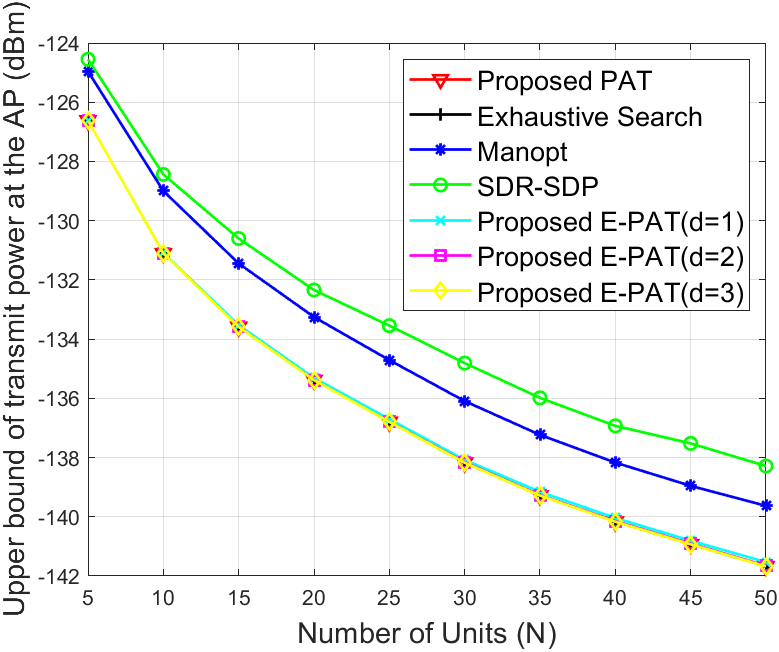}}
    \subfigure[]{
    \label{10-4}
    \includegraphics[width=0.48\linewidth]{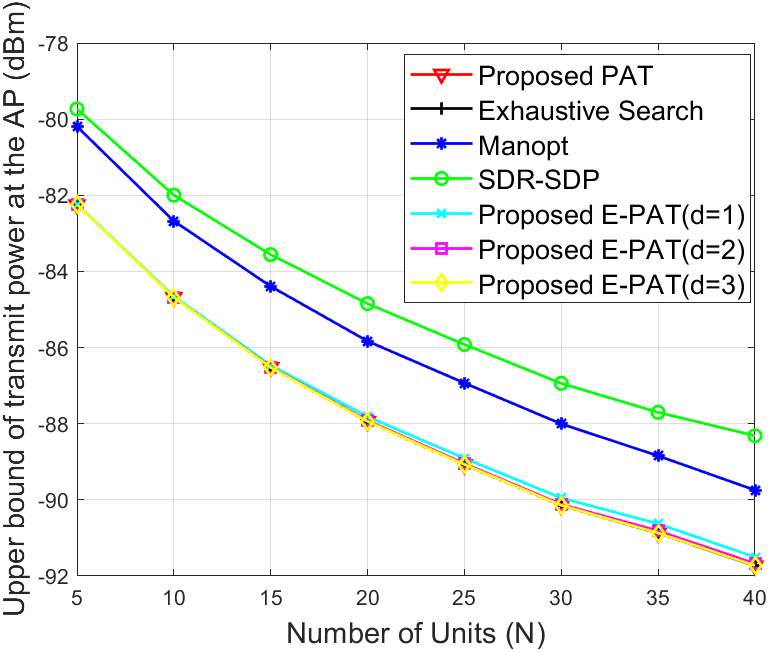}}
    \caption{A comparison of minimum transmit power performance as function $N$. (a) $M = 1, D = 1$. (b) $M = 1, D = 2$. (c) $M = 3, D =1$. (d) $M = 2, D =2$}
    \label{10}
\end{figure}
\begin{figure*}[b]
    \centering
    \subfigure[]{
    \label{11-1}
    \includegraphics[width=0.32\linewidth]{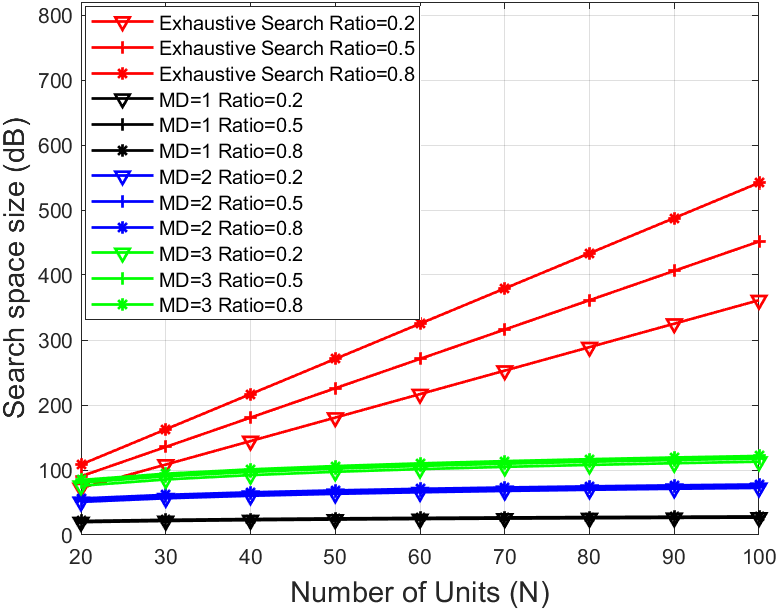}}
    \hfill
    \subfigure[]{
    \label{11-2}
    \includegraphics[width=0.32\linewidth]{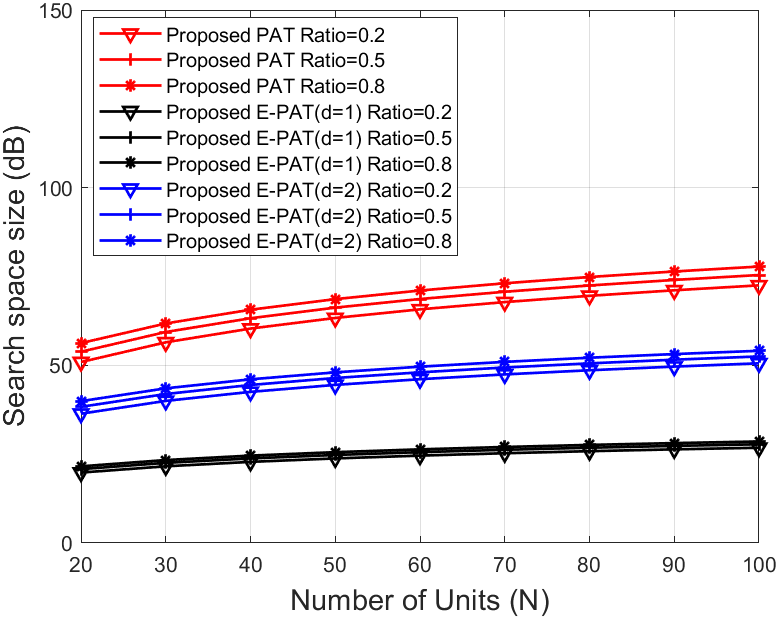}}
    \hfill
    \subfigure[]{
    \label{11-3}
    \includegraphics[width=0.32\linewidth]{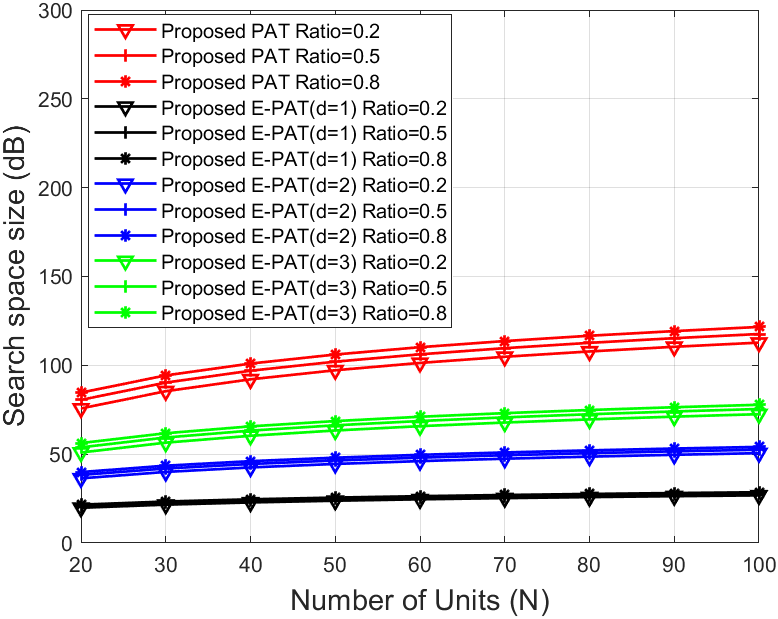}}
    \caption{A comparison of search space size as function $N$. (a) Exhaustive research versus PAT. (b) $MD=2$, E-PAT versus PAT. (c) $MD=3$, E-PAT versus PAT.}
    \label{11}
\end{figure*}
\subsubsection{Non-Uniformity at Array Level}
To analyze the impact of non-uniform bit resolution, we compare the results of various algorithms under different ratios of 2-bit units among all units, as shown in Fig~\ref{9}. The findings demonstrate that the proposed algorithm consistently maintains the same transmission power performance as the exhaustive search method. As the ratio increases and the bit distribution becomes non-uniform, the gap between SDR-SDP and the proposed algorithm gradually widens. However, as the ratio continues to increase and the bit distribution becomes more uniform, the gap between the two methods diminishes. This indicates that the proposed algorithm demonstrates a clear advantage over traditional algorithms when dealing with phase non-uniformity at the array level.

\subsection{Comprehensive Analysis of the Proposed Algorithm}

\subsubsection{Minimization of Transmit Power}
In Fig~\ref{10}, we compare the transmit power performance of various algorithms under different values of $M$ and $D$.  Here, $M$ represents the number of users, $D$ denotes the number of transmitting antennas, and $MD$ indicates the rank of the problem being solved. From subfigures (a) and (b), it is evident that the PAT algorithm successfully achieves the global optimal solution, outperforming both the SDR-SDP and Manopt algorithms. Specifically, the PAT algorithm outperform both SDR-SDP and Manopt by at least 2 dB and 4 dB, respectively. From subfigures (c) and (d), it can be observed that as the value of $d$ increases, the performance gap between the E-PAT algorithm and the PAT algorithm gradually narrows. Additionally, when $d=1$, the E-PAT algorithm exhibits a minor performance gap compared to the exhaustive search, but shows a larger gap of approximately 2 dB compared to the Manopt and SDR-SDP algorithm.

\subsubsection{Search Complexity Analysis}

Let $\mathcal{I}$ represent the set containing all possible $I$, and the size of the set $V$ obtained by the proposed algorithm is given by $\sum_{I \in \mathcal{I}}2^{L}\prod_{n \in I}b_n$, where $L$ is the size of the set $I$. Note that, the size of the search domain for the exhaustive search is $\prod_{n=1}^N b_n$. We compare the search space sizes of the PAT algorithm, the E-PAT algorithm, and the exhaustive method, with the proportion of 2-bit units set to 0.2, 0.5, and 0.8 across all units. As shown in Fig~\ref{11-1}, the gap in search space size between the PAT algorithm and the exhaustive method grows significantly, as $N$ increases. The PAT algorithm's search space is significantly smaller than that of exhaustive search, being over 200 dB smaller when $N=100$. Fig~\ref{11-2} and (c) illustrate the comparison of search space sizes between the PAT and E-PAT algorithms when $MD = 2$ and $MD = 3$, respectively. It is evident that, as the parameter $d$ increases, the search space of the E-PAT algorithm also increases. However, it remains significantly smaller than that of the PAT algorithm. For example, when $N = 100$, the search space of the E-PAT algorithm is at least 50 dB smaller than that of the PAT algorithm under different ratios,. In addition to the algorithm's advantage in terms of the size of the search space, the independent computation of subregions allows for parallel operation. Moreover, the necessity to store only the optimal solution, without the need for non-optimal ones, significantly reduces the memory requirements. These advantages make our proposed PAT algorithm still particularly applicable in large-scale scenarios.

\subsubsection{Optimality of the E-PAT Algorithm}
To further evaluate the performance of the E-PAT algorithm, we conducted a series of repeated experiments and derived relevant performance metrics, as shown in Table~\ref{Performance}. The experimental results show that both the relative error and optimality probability of the E-PAT algorithm decrease as the parameter $d$ increases, while its computational complexity increases accordingly. Specifically, its relative error consistently remains below the $10^{-2}$ level, while its complexity is less than $1\%$ of that of the PAT algorithm. Thus, by introducing a small performance loss, the E-PAT algorithm successfully achieves a significantly lower computational complexity compared to the PAT algorithm. Moreover, as we discussed in the Search Complexity Analysis, the search space of the E-PAT algorithm scales with O(N) when $d=1$. Coupled with the strategy we previously mentioned for achieving lower $p_e$, the E-PAT algorithm enables optimal solution retrieval with low computational complexity. This positions the E-PAT algorithm as a more competitive and practically valuable solution for large-scale applications.
\begin{table*}[ht]
    \centering
    \caption{Performance of E-PAT Algorithm with Varying Parameters Across Multiple Repeated Experiments}
    \label{Performance}
    \setlength{\tabcolsep}{2.5mm} 
    \begin{tabular}{cccccccccc}
    \toprule
    \multirow{2}{*}{N} & \multirow{2}{*}{MD} & \multirow{2}{*}{d} & \multirow{2}{*}{Relative Error}   & \multicolumn{2}{c}{Complexity Ratio} & \multirow{2}{*}{Optimality Probability (\%)}\\
                        &&&& vs. PAT (\%) & vs. Exhaustive search (\%) & \\
    \midrule
    20 & 2 & 1 & 6.21E-01 & 5.70E-02 & 4.17E-05 & 33.70 \\
    30 & 2 & 1 & 7.80E-01 & 2.28E-02 & 1.96E-09 & 21.26 \\
    40 & 2 & 1 & 9.90E-01 & 1.13E-02 & 2.08E-14 & 5.02 \\
    20 & 2 & 2 & 1.28E-02 & 2.81E+00 & 6.33E-04 & 92.94 \\
    30 & 2 & 2 & 3.38E-02 & 1.76E+00 & 2.32E-08 & 82.94 \\
    40 & 2 & 2 & 3.46E-02 & 1.50E+00 & 2.42E-10 & 80.54 \\
    20 & 3 & 1 & 1.35E+00 & 1.40E-04 & 4.17E-05 & 14.12 \\
    30 & 3 & 1 & 1.50E+00 & 7.28E-05 & 5.77E-06 & 11.38 \\
    40 & 3 & 1 & 1.91E+00 & 2.16E-05 & 7.46E-09 & 5.64 \\
    20 & 3 & 2 & 1.26E-01 & 7.40E-03 & 2.20E-03 & 64.34 \\
    30 & 3 & 2 & 1.63E-01 & 4.40E-03 & 3.46E-04 & 58.78 \\
    40 & 3 & 2 & 2.76E-01 & 1.70E-03 & 5.96E-07 & 39.28 \\
    20 & 3 & 3 & 1.80E-03 & 2.45E-01 & 7.31E-02 & 97.90 \\
    30 & 3 & 3 & 2.60E-03 & 1.68E-01 & 1.33E-02 & 97.48 \\
    40 & 3 & 3 & 4.60E-03 & 8.97E-02 & 3.09E-05 & 95.44 \\
    \bottomrule
    \end{tabular}
\end{table*}

\subsection{Selection Strategy for $d$ in the E-PAT Algorithm}
The term $d$ directly influences the performance of the E-PAT algorithm. Based on equation (28), it can be inferred that the optimal value of $d$ is $MD$. This is because, as $d$ increases, $\left(\begin{matrix}2MD\\2MD-d\\\end{matrix}\right)$ initially increases and then decreases, reaching its maximum when $d=MD$, while $p_e$ continuously decreases. Consequently, $p_c$ rapidly increases with $d$ and the growth rate slows down when $d>MD$. Simultaneously, the complexity ratio increases sharply with $d$. Therefore, considering the balance between relative error and complexity ratio, we conclude that the optimal choice for $d$ is when $d=MD$. For reference,  we present the empirical trade-off between relative error and complexity ratio for various values of $d$ in the E-PAT algorithm, as shown in Fig~\ref{12}. 
The experimental results show that when $MD=3$ and $d=3$, the relative error rate is almost zero. However, when $d$ is reduced to 2, the relative error rate increases significantly. Similarly, when $MD=4$ the relative error rate increases substantially when $d$ decreases to 3. Therefore, based on the experimental results, it can be concluded that selecting $d=MD$ as the optimal value is correct.
\begin{figure}[htbp]
    \centering
    \subfigure[]{
    \label{12-1}
    \includegraphics[width=1\linewidth]{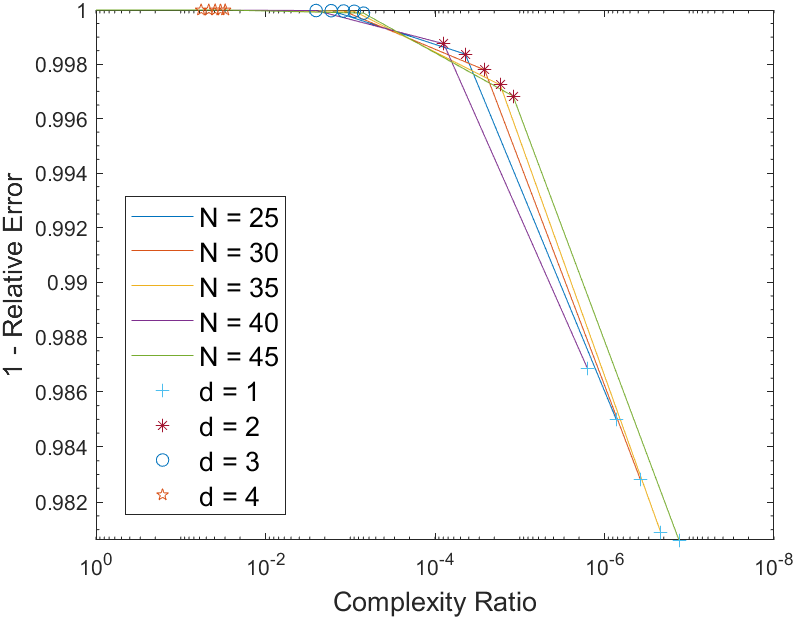}}
    \subfigure[]{
    \label{12-2}
    \includegraphics[width=1\linewidth]{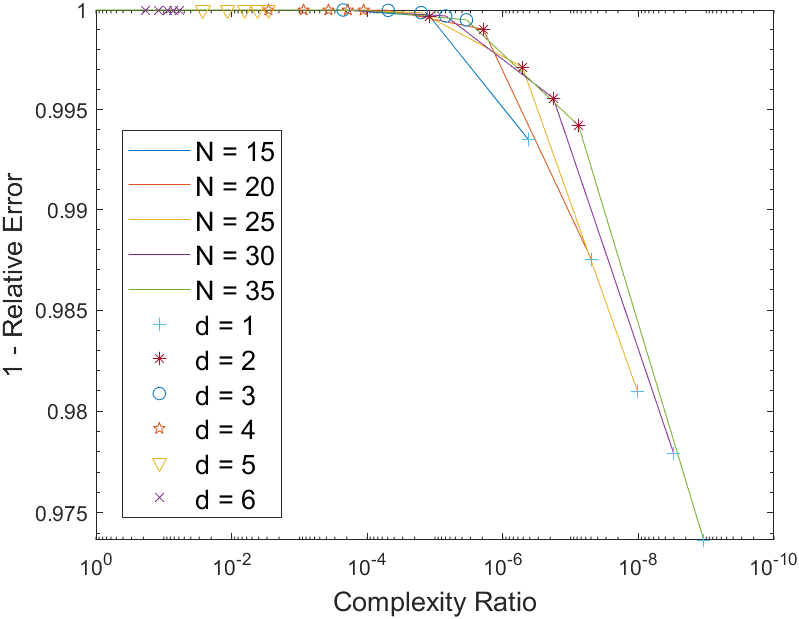}}
    \caption{The trade-off between relative error and complexity ratio in the E-PAT algorithm. (a) $MD=3.$ (b) $MD=4.$}
    \label{12}
\end{figure}

\section{CONCLUSION}\label{Section6}


In this paper, we investigate the beamforming optimization problem for RIS-assisted communication systems under non-uniform phase quantization. To address the challenges posed by non-uniform unit-level and array-level phase configurations, we formulate a discrete optimization problem aimed at minimizing the total transmit power while meeting a minimum SNR requirement. We propose a Partition-and-Traversal (PAT) algorithm that guarantees the global optimum by systematically partitioning the search space and exhaustively traversing it. To further balance computational complexity and performance, we introduce the Efficient Partition-and-Traversal (E-PAT) algorithm, which significantly reduces computational overhead while maintaining near-optimal performance. Extensive numerical simulations validate that the PAT algorithm consistently attains the global optimum, outperforming conventional methods, while the E-PAT algorithm offers a practical solution for large-scale RIS-assisted systems. Additionally, we provide an analysis of the optimal selection strategy for the E-PAT algorithm’s parameter $d$, ensuring a favorable trade-off between computational efficiency and solution accuracy.

An important research direction is to further refine the E-PAT algorithm by developing strategies to reduce the probability of missing the optimal solution while keeping computational complexity low. Additionally, extending the proposed optimization framework to more general scenarios, such as wideband multi-carrier communications and RIS-aided MIMO systems, presents promising opportunities for future work.
\bibliographystyle{IEEEtran}
\bibliography{Reference}
%
%
%
%
%
%
%
%
%
%

\newpage

 


\vspace{11pt}


\vfill
\end{document}